%% file: main.tex
\documentclass[12pt]{article}

\usepackage[T1]{fontenc}
\usepackage[utf8]{inputenc}
\usepackage[letterpaper,margin=1in]{geometry}
\usepackage{setspace}
\usepackage{amsmath,amssymb,amsthm,mathtools,bm}
\usepackage{booktabs,array,placeins,flafter}
\usepackage[authoryear]{natbib}
\usepackage[colorlinks=true,linkcolor=blue,citecolor=blue,urlcolor=blue]{hyperref}
\hypersetup{
  pdftitle={Representation Multiplicity in Causal Forests},
  pdfauthor={Yi Niu},
  pdfkeywords={causal forests, feature subsampling, randomized search,
    representation invariance, redundant covariates}
}

\allowdisplaybreaks
\newtheorem{assumption}{Assumption}
\newtheorem{theorem}{Theorem}
\newtheorem{proposition}{Proposition}
\newtheorem{corollary}{Corollary}
\newcommand{\E}{\mathbb E}
\newcommand{\Prb}{\mathbb P}

\newcommand{\one}{\mathbf 1}
\newcommand{\Pois}{\operatorname{Poisson}}
\newcommand{\legend}[1]{%
  \par\medskip\begin{minipage}{\linewidth}\footnotesize
  \textit{Notes:} #1
  \end{minipage}}

\begin{document}
\title{Representation Multiplicity in Causal Forests}
\author{Yi Niu\thanks{Department of Economics, University of Pennsylvania. Email: \texttt{n1y1@sas.upenn.edu}}}
\date{September 2026}
\maketitle

\begin{abstract}
\noindent Including covariates alongside strictly monotone encodings can change a causal forest's treatment decisions without adding information. Random feature selection favors covariates represented by multiple columns. Under stated conditions, I show that this imbalance can persist as samples grow. Treatment effect components associated with other covariates are omitted, attenuated, or recovered depending on their inclusion probabilities and tree depth. Simulations and a job-training replication illustrate sensitivity to redundant encodings. Sampling groups of variables that generate identical splits restores prediction invariance on the grouping data when fitting and randomization are held fixed.
\medskip

\noindent\textbf{Keywords:} causal forests, feature subsampling, policy learning,
representation invariance, redundant covariates.

\medskip
\noindent\textbf{JEL Classification:} C14, C21, C55
\end{abstract}

\newpage

\section{Introduction}\label{sec:intro}

Suppose two researchers use a causal forest to estimate how a job-training
program affects later earnings as a function of baseline household income and
age. One includes income and age. The other keeps age but includes income, log
income, and several other strictly increasing encodings of income. If baseline
income has positive support, these income columns generate the same threshold
partitions and contain the same covariate information. A feature-subsampled
forest nevertheless treats them as separate candidates, making an income
split more likely to enter the node-level search and an age split less likely.

The effect of this change depends on the whole path to a leaf. A covariate that
drives treatment-effect heterogeneity matters only after a split separates its
relevant states. If such a covariate is represented once among many columns,
each node offers a small chance to use it. Those chances accumulate with path
depth.

The main result models two independent covariates with binary treatment-effect
components. One has a smaller coefficient than the other.
The first representation repeats the covariate with the smaller coefficient
and leaves the other as a singleton. The second does the reverse. Both
representations generate the same sigma-field and the same
split-action library. Let \(H_n\) be effective path depth, \(M_n\) the number
of columns, and \(Q_{M_n}\) the number sampled at a node. The access-depth index is $\mathcal I_n=H_n{\E [Q_{M_n}]}/{M_n}$.

Under the continuation and score-transfer conditions stated below, if
\(\mathcal I_n\to0\), the singleton covariate remains unused along a
prediction path with probability approaching one. Each forest then retains
only its repeated component. The two conditional average treatment effect (CATE)
limits have opposite signs on half the covariate population, and repeating the
covariate with the smaller coefficient produces nonvanishing policy regret. If
\(\mathcal I_n\to\kappa\in(0,\infty)\), the singleton component enters with a strictly intermediate weight under the additional path-depth condition in Theorem~\ref{thm:main}. If \(\mathcal I_n\to\infty\), both forests recover
the CATE. Under these conditions, recovery holds for every fixed finite
representation but not uniformly over growing, information-equivalent
representations.

The candidate-count rule translates the access-depth index into a
condition on how quickly the number of columns may grow. In the \texttt{grf} R
package for generalized random forests, the \texttt{mtry} parameter controls
how many covariate columns are randomly sampled as split candidates at each
node. With \texttt{mtry} fixed,
\(\mathcal I_n\) has order \(H_n/M_n\). Under the package-default square-root
scaling in \texttt{grf}, it has order
\(H_n/\sqrt{M_n}\). The critical representation size is thus of order
\(H_n\) in the first case and \(H_n^2\) in the second.

With 500 baseline columns and eight equivalent encodings, the Monte Carlo
forests disagree in CATE sign for 45.7 percent of evaluation observations.
Changing only the forest seed gives 0.09 percent. The random-forest
outcome-regression specifications in \citet{chernozhukov2022automatic} retain
age, education, and two prior-earnings variables together with their squares.
The replication removes split-equivalent columns from the outcome regression
while holding the samples, folds, Riesz stage based on the least absolute
shrinkage and selection operator (Lasso), and orthogonal score fixed.

The remedy is to sample split-action classes rather than raw column labels.
On a declared finite array, columns with the same weak ordering and ties
generate the same threshold actions for classification and regression trees
(CART). Collapsing them before candidate sampling
gives identical predictions under canonical routing and common class-level
randomness. In the
Monte Carlo design, class sampling eliminates representation-induced sign
disagreement and reduces policy regret, although it increases CATE mean
squared error. In the application of \citet{chernozhukov2022automatic},
grouping each raw variable with its split-equivalent square removes the extra
candidate weight created by retaining both columns while preserving every
certified CART threshold partition.

The paper builds on the honest causal-forest framework of \citet{wager2018}
and the local-moment formulation of generalized random forests in
\citet{athey2019}. These papers establish conditions for consistent estimation
and valid inference using forest-based adaptive neighborhoods. I retain honest
sample separation and examine how the covariate representation changes those
neighborhoods. Honest sample separation does not by itself ensure that every
treatment-effect component is recovered, because feature subsampling can keep
a relevant covariate out of the split search.

\citet{louppe2014} explains how retaining redundant predictors changes
their probability of entering the split search and can affect prediction.
Here I study how that change in split access affects limiting CATE estimates
and treatment-rule regret.
\citet{chi2022high} and \citet{klusowski2024large}
provide access-dependent recovery bounds. \citet{mei2026exogenous} also
characterize bias from unresolved components for forests with
training-data-independent partitions. I derive representation-specific
prediction limits for sample-grown honest trees under continuation and
score-transfer conditions. Their path lengths depend on earlier splits, so
singleton access must be analyzed jointly with stopping. An additional
failure-path depth condition identifies the partial-recovery weight.

\citet{chen2013module} sample modules before representative variables, while
\citet{colot2021} weight features inversely by cluster size. Here, groups are
certified to generate identical threshold partitions on a declared array.
Proposition~\ref{prop:class} establishes exact invariance on this array under
canonical routing and common class-level randomness.

Section~\ref{sec:model} defines the model. Section~\ref{sec:results} derives
the forest limits and their treatment-rule consequences.
Section~\ref{sec:quotient} gives the class-sampling guarantee, and
Section~\ref{sec:evidence} reports Monte Carlo results and a replication of
the empirical analyses in \citet{chernozhukov2022automatic}.
All proofs are in the appendix.

\section{Model Setup}\label{sec:model}

Let \(X=(X_1,X_2)\sim\operatorname{Unif}([0,1]^2)\) be the covariate
vector. For coordinate \(j\in\{1,2\}\), define the binary state indicator
\[
s_j(X)=2\one\{X_j\ge1/2\}-1,
\]
and define the CATE
\[
\tau_a(X)=a_1s_1(X)+a_2s_2(X),
\]
indexed by the coefficient vector \(a=(a_1,a_2)\). Restrict \(a\) to the
parameter class
\[
\mathcal C_a=
\{(a_1,a_2):\underline a\le a_1<a_2\le\overline a,
\ a_2-a_1\ge\Delta_a\}
\]
where \(\underline a\) and \(\overline a\) are fixed lower and upper
coefficient bounds satisfying \(0<\underline a<\overline a\), and
\(\Delta_a\) is a fixed minimum coefficient gap, with
\(0<\Delta_a<\overline a-\underline a\).
These restrictions keep the sign and regret comparisons nondegenerate
uniformly over \(\mathcal C_a\).

Let \(n\) be the sample size, and let \((Y_i,W_i,X_i)_{i=1}^n\) denote the
i.i.d. observations. For treatment state \(w\in\{0,1\}\), the potential
outcome \(Y(w)\) follows
\begin{equation}\label{eq:dgp}
Y(w)=m_0(X)+w\tau_a(X)+U,
\qquad
\E(U\mid X,W)=0,
\end{equation}
where \(m_0\) is the baseline outcome function and \(U\) is the outcome
disturbance. The binary treatment indicator \(W\) is randomized with
propensity score one half, and the observed outcome \(Y\) satisfies
\[
W\perp\{Y(0),Y(1)\}\mid X,
\qquad
\Prb(W=1\mid X)=\frac12,
\qquad
Y=Y(W).
\]
The baseline function \(m_0\) and the conditional law of \(U\) are held fixed
over \(a\in\mathcal C_a\). Define the observed-outcome regression
\(m_a(X)=\E(Y\mid X)=m_0(X)+\tau_a(X)/2\). The oracle transformed-outcome
score for observation \(i\) is
\begin{equation}\label{eq:score}
\Gamma_i=4(W_i-1/2)\{Y_i-m_a(X_i)\}
\end{equation}
and satisfies \(\E(\Gamma_i\mid X_i=x)=\tau_a(x)\) at every covariate value
\(x\). The fitted score \(\widehat\Gamma_i\) replaces \(m_a\) with the
cross-fitted outcome-regression estimator \(\widehat m\).

For covariate index \(j\in\{1,2\}\), let \(c_n\ge1\) be its
multiplicity, the number of equivalent encoded columns. Choose strictly increasing encoding maps
\(\{\varphi_{j\ell n}:\ell=1,\ldots,c_n\}\), including the identity, and
denote the \(\ell\)-th encoded column by
\(X_j^{(\ell)}=\varphi_{j\ell n}(X_j)\). Define the paired covariate
representations
\begin{align*}
R_{1,n}(X)=\{X_1^{(1)},\ldots,X_1^{(c_n)},X_2\},\qquad R_{2,n}(X)=\{X_1,X_2^{(1)},\ldots,X_2^{(c_n)}\}.
\end{align*}
Each representation has \(M_n=c_n+1\) raw columns. Both representations
generate the covariate sigma-field \(\sigma(X_1,X_2)\). A cut \(X_j\le t\)
corresponds to \(\varphi_{j\ell n}(X_j)\le\varphi_{j\ell n}(t)\) in each
encoded column. These corresponding cuts preserve partitions and routing in
exact arithmetic.

Let \(s_n\) be the tree subsample size. Each tree draws \(s_n\)
observations uniformly without replacement and
randomly divides them into splitting and estimation samples, independently
of their values. Both sample fractions are at least a fixed
\(\eta\in(0,1/2)\), the minimum sample-allocation fraction.
Write \(n_C\) for the splitting-sample count in node
\(C\). At a node with \(M\) raw columns, the tree draws
candidate count \(Q_M\in\{1,\ldots,M\}\) and samples that many columns uniformly without
replacement, independently of the data and previous candidate draws. It
evaluates every distinct threshold partition of the sampled columns. With
minimum child splitting-sample size \(k_n\) and split-balance parameter
\(\beta\in(0,1/2)\), a split is admissible when each child contains at least
\(\max\{k_n,\beta n_C\}\) splitting observations.
The tree chooses a maximum-gain admissible split, using a
representation-independent order for exact ties. It stops if no sampled
admissible action has positive gain, with no separate depth cap or stopping
rule. The estimation sample supplies the mean fitted score in each leaf,
with value zero assigned to an empty estimation leaf.
Conditional on the training data, fitted nuisance functions, and root sample
assignments, each node's action depends only on its ancestral states, its
observations, and fresh randomness assigned to that node. Randomization is
independent across node addresses.

For a generic tree under representation \(h\), let \(D_{h,n}(x)\) indicate
that the singleton \(X_{3-h}\) has been used as a splitting coordinate along
the path to \(x\). Even after a split on \(X_{3-h}\), the terminal
leaf may contain both values of \(s_{3-h}\). The proof controls the resulting approximation error in
\(L^1(P_X)\), where \(P_X\) is the covariate distribution and
\(\|f\|_{L^1(P_X)}=\int |f(x)|\,dP_X(x)\) is the integrated absolute norm.
For a candidate split indexed by column \(j\) and threshold \(t\), let \(C\)
be the parent cell and \(C_L\) and \(C_R\) its left and right child cells.
Define the pseudo-outcome split-gain criterion
\begin{equation}\label{eq:pseudo-criterion}
\widehat{\mathcal Q}^{\Gamma}_C(j,t)
=\widehat p_{L\mid C}\widehat p_{R\mid C}
\left(\overline{\widehat\Gamma}_{C_L}
-\overline{\widehat\Gamma}_{C_R}\right)^2.
\end{equation}
Here \(\widehat p_{L\mid C}\) and \(\widehat p_{R\mid C}\) are the empirical
child shares within \(C\), and
\(\overline{\widehat\Gamma}_{C_L}\) and
\(\overline{\widehat\Gamma}_{C_R}\) are the corresponding child-cell score
means. Let \(\widehat{\mathcal Q}^{L}_C(j,t)\) be the forest algorithm's normalized
split-gain criterion, \(\check T_{h,n}\) the arithmetic mean of the
\(B_n\) honest score-leaf predictions, where \(B_n\) is the number of trees,
and \(\widehat T_{h,n}\) the
reported forest predictor under
representation \(R_{h,n}\). The following conditions connect the forest
algorithm to the score and control tree depth.

\begin{assumption}\label{ass:regularity}
All probability statements and rates hold uniformly over
\(a\in\mathcal C_a\). Outcomes are uniformly bounded, the conditional
law of \(U\) has a continuous density on bounded support, and
\(\|\widehat m-m_a\|_\infty=o_p(1)\).

For fixed \(C_0>4\), with probability tending to one, every column
admits a positive-gain admissible split at every reached node with
\(n_C\ge C_0k_n\), on both ordinary and locally conditioned failure
trees whenever defined. For fixed \(\bar s<1\),
\begin{equation}\label{eq:rates}
s_n/n\le\bar s,\qquad
k_n\to\infty,\qquad k_n/s_n\to0,\qquad
\frac{\log(B_ns_n)}{k_n}\to0,\qquad B_n\to\infty.
\end{equation}
On ordinary and failure trees,
\begin{equation}\label{eq:transfers}
\sup_{C,j,t}
\left|\widehat{\mathcal Q}^{L}_C(j,t)
-\widehat{\mathcal Q}^{\Gamma}_C(j,t)\right|=o_p(1),
\qquad
\max_{h=1,2}
\|\widehat T_{h,n}-\check T_{h,n}\|_{L^1(P_X)}=o_p(1).
\end{equation}
Conditional on the training data and fitted nuisance functions,
trees have identical marginal distributions and form independent
groups of uniformly bounded size.
\end{assumption}

Assumption~\ref{ass:regularity} connects the algorithm's split choices
and predictions to the treatment effect. The first condition in
\eqref{eq:transfers} ensures that the algorithm prefers a split with
a meaningfully larger gain in the fitted treatment-effect score.
The second allows its reported predictions to be analyzed through
averages of that score within honest leaves. The continuation condition
ensures that a tree keeps splitting while its nodes remain sufficiently
large, including when the candidate set contains only a previously used
covariate. Continued growth gives the other covariate further chances
to enter the split search. Together, these conditions allow the theorem
to relate tree depth and candidate inclusion probabilities to the
treatment effect components that the forest recovers.

Finally, define the effective path depth \(H_n\), the singleton's per-node
inclusion probability \(\omega_n\), and the access-depth index
\(\mathcal I_n\) by
\begin{equation}\label{eq:access-index}
H_n=\log(s_n/k_n),
\qquad
\omega_n=\frac{\E Q_{M_n}}{M_n},
\qquad
\mathcal I_n=H_n\omega_n.
\end{equation}

\section{Forest Limits and Treatment-Rule Distortion}\label{sec:results}

For \(\omega_n<1\), define a failure tree by conditioning the ordinary
transition at each visited node locally on not choosing the singleton as the
splitting coordinate. A conditioned transition can split on the repeated
coordinate or stop. Its conditioning probability is at least
\(1-\omega_n>0\). Let \(J_{h,n}^0(X)\) count its candidate-set draws at
size-eligible nodes, including a final unsuccessful search if one occurs.
In the intermediate regime \(\mathcal I_n\to\kappa\in(0,\infty)\),
\(\omega_n\to0\), so this construction is eventually defined. For each
fixed \(a\in\mathcal C_a\), additionally suppose
\begin{equation}\label{eq:intermediate-depth}
\frac{J_{h,n}^0(X)}{H_n}\longrightarrow_p d_h(a)\in(0,\infty),
\qquad h=1,2,
\end{equation}
where \(d_h(a)\) is the limiting normalized failure-path depth under
representation \(h\). The convergence accounts for randomness in the training sample, the
locally conditioned failure-tree construction, and an
independent test point \(X\sim P_X\).

\begin{theorem}[Representation-dependent forest limits]\label{thm:main}
Under the model and forest construction in Section~\ref{sec:model} and
Assumption~\ref{ass:regularity}, the two endpoint results hold uniformly over
\(a\in\mathcal C_a\). If \(\mathcal I_n\to0\), then
\begin{equation}\label{eq:selection-limit}
\max_{h=1,2}\|\widehat T_{h,n}-a_hs_h\|_{L^1(P_X)}\longrightarrow_p0.
\end{equation}
If \(\mathcal I_n\to\infty\), then
\begin{equation}\label{eq:recovery-limit}
\max_{h=1,2}\|\widehat T_{h,n}-\tau_a\|_{L^1(P_X)}\longrightarrow_p0.
\end{equation}
For fixed \(a\in\mathcal C_a\), if \(\mathcal I_n\to\kappa\in(0,\infty)\) and \eqref{eq:intermediate-depth} holds, then
\begin{equation}\label{eq:intermediate-targets}
\begin{aligned}
\widehat T_{1,n}&\longrightarrow_p
 a_1s_1+\{1-e^{-\kappa d_1(a)}\}a_2s_2,\\
\widehat T_{2,n}&\longrightarrow_p
 \{1-e^{-\kappa d_2(a)}\}a_1s_1+a_2s_2
\end{aligned}
\qquad\text{in }L^1(P_X).
\end{equation}
\end{theorem}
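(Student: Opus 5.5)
The plan reduces every claim to a statement about one honest tree and then uses the forest structure to pass to $\widehat T_{h,n}$. By the second transfer condition in \eqref{eq:transfers}, it suffices to prove each limit for $\check T_{h,n}$. Because trees are identically distributed and form independent groups of bounded size, conditionally on the data, a weak law applies to $B_n^{-1}\sum_b T_b(x)$, with $B_n\to\infty$ and bounded outcomes. This reduces the problem to the expected single-tree prediction $\bar T_{h,n}(x)=\E[T_b(x)\mid\text{data}]$. Honesty and the rate $\log(B_ns_n)/k_n\to0$ give a uniform bound: over all leaves with at least order $k_n$ estimation points and across all trees, leaf means of $\widehat\Gamma$ are within $o_p(1)$ of the population means of $\tau_a$ on the leaf. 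This uses a union bound with Hoeffding over at most $B_ns_n$ leaves. The bound uses $\|\widehat m-m_a\|_\infty=o_p(1)$ and the balance constant $\eta$, which ensure that estimation counts are proportional to splitting counts. Hence $\bar T_{h,n}(x)$ is, up to $o_p(1)$ in $L^1(P_X)$, the expectation over the tree randomness of $\E[\tau_a(X)\mid X\in L(x)]$, where $L(x)$ is the leaf containing $x$.

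Next I analyse the population leaf mean along a path. The first transfer condition shows that at a node whose cell straddles the $1/2$ cut in coordinate $j$, the maximal-gain split among sampled columns of coordinate $j$ lands within $o_p(1)$ of $1/2$. Pseudo-outcome gain for a $j$-split is maximised near the jump of $a_js_j$, and gains from a coordinate whose state is already resolved are $o_p(1)$. Once a coordinate's jump is separated, its component is constant on the leaf up to a vanishing boundary fraction. Since $X_1$ and $X_2$ are independent uniform and $\tau_a$ is additive, an unresolved component averages to $0$. So $\E[\tau_a\mid L(x)]\approx a_hs_h(x)+D_{h,n}(x)\,a_{3-h}s_{3-h}(x)$, where $D_{h,n}$ includes both "singleton used" and "repeated coordinate resolved". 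The repeated coordinate is offered with probability $\ge 1-(1-\omega)^{c_n}$, which is close to $1$ in all regimes once $\omega_n\to0$, or bounded below otherwise. Together with the continuation condition, this guarantees that it is resolved within $O(1)$ levels. Continuation, $k_n/s_n\to0$, and the balance parameter $\beta$ give path depth of order $\log(s_n/k_n)=H_n$. More precisely, depth lies between $H_n/\log(1/\beta)$ and $H_n/\log(1/(1-\beta))$ with probability $\to1$. The limits then hinge on $\Prb(D_{h,n}(X)=1)$.

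The three regimes are handled as follows. If $\mathcal I_n\to0$, a union bound over at most $cH_n$ candidate draws on the path gives $\Prb(D_{h,n}=1)\le cH_n\omega_n\to0$, which yields \eqref{eq:selection-limit}. If $\mathcal I_n\to\infty$, continuation guarantees at least $c'H_n$ size-eligible nodes, even when only the repeated coordinate is sampled. The draws are fresh and independent per node, so $\Prb(D=0)\le(1-\omega_n)^{c'H_n}\le e^{-c'\mathcal I_n}\to0$, which yields \eqref{eq:recovery-limit}. Uniformity over $\mathcal C_a$ follows because these bounds do not depend on $a$ beyond the fixed gap $\Delta_a$ used in the gain comparisons.

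The intermediate regime is the main obstacle, because path length depends on earlier splits and so is not independent of the access events. My approach is the failure-tree coupling. The event that the singleton is never chosen has probability $\E\prod_{\text{nodes}}(1-\pi_v)$, where $\pi_v$ is the conditional per-node probability of choosing the singleton. On the locally conditioned tree, each of the $J^0_{h,n}$ candidate draws avoids it independently with probability $1-\omega_n$. A first-step decomposition shows that $\Prb(D_{h,n}=0)=\E[(1-\omega_n)^{J^0_{h,n}}]+o(1)$; the remainder accounts for draws where the singleton is sampled but loses the search, which the gain comparison makes negligible once its jump is straddled. Then \eqref{eq:intermediate-depth} and $\omega_n\to0$ give $(1-\omega_n)^{J^0}=\exp\{-\omega_nH_n\,J^0/H_n(1+o(1))\}\to_p e^{-\kappa d_h(a)}$, and bounded convergence yields the weight $1-e^{-\kappa d_h(a)}$ in \eqref{eq:intermediate-targets}. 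I expect making the sampled-but-rejected correction rigorous to be the hardest step. My first attempt would be to show that, at straddling nodes, the singleton's gain exceeds every resolved-coordinate gain by a margin proportional to $a_{3-h}^2$, so that sampled implies chosen with probability $\to1$.
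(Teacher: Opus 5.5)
Your plan is essentially the paper's proof: reduce via the prediction transfer and tree averaging to the expected honest tree, localize first-use cuts at $1/2$, bound path depth by continuation and balance, write the expected prediction as a mixture indexed by $D_{h,n}$, and, in the intermediate regime, use the failure-tree product identity with the per-node choice probability replaced by $\omega_n$ through the gain-margin argument you propose (your direct union bound and fresh-draw product for the two endpoints are a slightly more elementary substitute for the paper's use of the hazard identity there, and they also cover $\omega_n=1$ without separate treatment). The claim to tighten is that a used coordinate's gain is $o_p(1)$ at every later node: this fails pointwise, because continuation keeps forcing noise-driven splits on the repeated coordinate that can isolate the mislabeled sliver between the first-use cut $\widehat t$ and $1/2$, so the paper bounds that gain by $4\overline a^2q(C)$, with $q(C)$ the cell's wrong-label fraction, and discards the first descendants with $q(C)>\sqrt{\delta_n}$, whose total mass is at most $\sqrt{\delta_n}$, which is harmless since every target is in $L^1(P_X)$.
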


When \(\mathcal I_n\to0\), the singleton remains unused with
probability approaching one, and its component drops out of the limit. In the intermediate regime,
\(e^{-\kappa d_h(a)}\) is the limiting probability that a path reaches its
leaf without using the singleton, so the singleton component is attenuated by
the complementary factor \(1-e^{-\kappa d_h(a)}\in(0,1)\). When the access-depth index
diverges, both components are recovered in the forest prediction.
The result applies equally to exact copies and to growing sequences of
strictly increasing encodings.

For a binary treatment rule \(\pi\), where \(\pi(x)=1\) assigns treatment
and \(\pi(x)=0\) withholds it, define its policy value \(V_a(\pi)\) and
policy regret \(\operatorname{Reg}_a(\pi)\) by
\[
V_a(\pi)=\E\{\tau_a(X)\pi(X)\},
\qquad
\operatorname{Reg}_a(\pi)=V_a(\pi_a^*)-V_a(\pi).
\]
Policy value is the expected treatment gain relative to treating no one,
and policy regret is the value lost relative to the optimal unconstrained
rule \(\pi_a^*\). This optimal rule and the fitted rule
\(\widehat\pi_{h,n}\) under representation \(h\) are
\[
\pi_a^*(x)=\one\{\tau_a(x)>0\}=\one\{s_2(x)=1\},
\qquad
\widehat\pi_{h,n}(x)=\one\{\widehat T_{h,n}(x)>0\}.
\]
Write \(\Prb_X\) for probability over an independent covariate draw,
conditional on the fitted forests.

\begin{corollary}[Sign reversal and policy regret]\label{cor:sign}
Under the conditions of Theorem~\ref{thm:main}, if \(\mathcal I_n\to0\), then
\begin{equation}\label{eq:sign-reversal}
\Prb_X\!\left\{\widehat T_{1,n}(X)\widehat T_{2,n}(X)<0,
\min_{h=1,2}|\widehat T_{h,n}(X)|\ge\underline a/2\right\}
\longrightarrow_p\frac12.
\end{equation}
Moreover,
\begin{equation}\label{eq:regret}
\operatorname{Reg}_a(\widehat\pi_{2,n})\longrightarrow_p0,
\qquad
\operatorname{Reg}_a(\widehat\pi_{1,n})
\longrightarrow_p\frac{a_2-a_1}{2}.
\end{equation}
\end{corollary}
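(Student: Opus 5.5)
The plan is to derive the corollary directly from the selection limit \eqref{eq:selection-limit} of Theorem~\ref{thm:main}. The argument has two steps: convert $L^1(P_X)$ convergence into convergence in $P_X$-probability, then evaluate the sign pattern of the limiting functions $a_1s_1$ and $a_2s_2$ exactly.

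\emph{Step 1: the sign-reversal probability.} By Markov's inequality applied conditionally on the fitted forests, for any fixed $\epsilon>0$,
\[
\Prb_X\{|\widehat T_{h,n}(X)-a_hs_h(X)|>\epsilon\}\le\epsilon^{-1}\|\widehat T_{h,n}-a_hs_h\|_{L^1(P_X)}\longrightarrow_p0,
\]
uniformly over $a\in\mathcal C_a$. Take $\epsilon<\underline a/2$ and work on the event $G_n$ where both approximation errors are at most $\epsilon$. The event $G_n^c$ has $\Prb_X$-probability tending to zero in probability.

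On $G_n$ we have $|\widehat T_{h,n}(X)|\ge a_h-\epsilon>\underline a/2$, since $a_h\ge\underline a$. The sign of $\widehat T_{h,n}(X)$ equals the sign of $s_h(X)$. Hence, on $G_n$, the event in \eqref{eq:sign-reversal} coincides with $\{s_1(X)s_2(X)<0\}$. Because $X_1$ and $X_2$ are independent and uniform, this event has probability exactly $1/2$. The difference between the two probabilities is bounded by $\Prb_X(G_n^c)=o_p(1)$. This gives \eqref{eq:sign-reversal}.

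\emph{Step 2: the regrets.} First compute the regret of a generic rule. Write $V_a(\pi_a^*)-V_a(\pi)=\E[\tau_a(X)\{\pi_a^*(X)-\pi(X)\}]$. Only points where $\pi\ne\pi_a^*$ contribute, and each such point contributes $|\tau_a(X)|$. Since $|\tau_a|\le 2\overline a$ is bounded,
\[
\operatorname{Reg}_a(\pi)\le 2\overline a\,\Prb_X\{\pi(X)\ne\pi_a^*(X)\}.
\]
Moreover, this expression depends continuously on the disagreement event.

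\emph{Regret of $\widehat\pi_{2,n}$.} On $G_n$, $\widehat\pi_{2,n}(X)=\one\{s_2(X)=1\}=\pi_a^*(X)$. So $\operatorname{Reg}_a(\widehat\pi_{2,n})\le2\overline a\,\Prb_X(G_n^c)\to_p0$.

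\emph{Regret of $\widehat\pi_{1,n}$.} On $G_n$, $\widehat\pi_{1,n}(X)=\one\{s_1(X)=1\}$. The regret of the exact rule $\pi^{(1)}=\one\{s_1=1\}$ is computed directly over the four equally likely state cells. The rules disagree on $\{s_1=1,s_2=-1\}$, where $\tau_a=a_1-a_2$ and the rule wrongly treats, contributing $(a_2-a_1)/4$. They also disagree on $\{s_1=-1,s_2=1\}$, where $\tau_a=a_2-a_1$ and the rule wrongly withholds, contributing another $(a_2-a_1)/4$. The total is $(a_2-a_1)/2$.

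Finally, $|\operatorname{Reg}_a(\widehat\pi_{1,n})-\operatorname{Reg}_a(\pi^{(1)})|\le 2\overline a\,\Prb_X(G_n^c)\to_p0$, which gives \eqref{eq:regret}.

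\emph{Main obstacle.} The only delicate point is bookkeeping, not analysis. $\Prb_X$ is a conditional probability given the fitted forests, and $L^1$ convergence in probability must transfer to $o_p(1)$ bounds on $\Prb_X(G_n^c)$ uniformly over $\mathcal C_a$. Markov's inequality does this directly, because the bound is linear in the $L^1$ norm and the constants $\underline a$ and $\overline a$ are uniform over the class. The gap condition $\Delta_a$ is not needed for the limits themselves. It only keeps the limit $(a_2-a_1)/2$ bounded away from zero.
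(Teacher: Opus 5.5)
Your proposal is correct and follows essentially the same route as the paper. Markov's inequality turns the \(L^1(P_X)\) selection limit into \(o_p(1)\) control of sign errors and small magnitudes, the limiting rules \(\one\{s_2=1\}\) and \(\one\{s_1=1\}\) are evaluated exactly (regret \(0\) and \((a_2-a_1)/2\)), and boundedness of \(\tau_a\) transfers these values to the fitted rules; your single threshold \(\epsilon<\underline a/2\) defining \(G_n\) and the explicit bound \(2\overline a\,\Prb_X(G_n^c)\) just spell out the paper's two Markov applications and its ``boundedness and classification convergence'' step.
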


Thus the representation that repeats the covariate with the larger coefficient
yields the optimal unconstrained rule in the selection regime. Repeating the
covariate with the smaller coefficient yields a persistent treatment mistake.

The \texttt{grf} 2.4.0 algorithm reference \citep{grfreference} specifies the
candidate-count law as
\begin{equation}\label{eq:poisson-count}
N_M\sim\Pois(\xi_M),
\qquad
Q_M=\max\{1,\min(N_M,M)\},
\qquad
\xi_M=\texttt{mtry}
\end{equation}
Here \(N_M\) is the uncapped Poisson candidate count, \(\xi_M\) is its
mean parameter, and \(Q_M\) is the candidate count after truncation to
\(\{1,\ldots,M\}\).

\begin{corollary}[Representation-growth rates]\label{cor:boundary}
Under the conditions of Theorem~\ref{thm:main}, if
\(M_n\to\infty\) and \(\xi_{M_n}\equiv\xi>0\), then
\[
\omega_n\sim\frac{\xi+e^{-\xi}}{M_n},
\]
so the critical representation size is \(M_n\asymp H_n\). Under the package
default \(\xi_M=\min\{\lceil\sqrt M\rceil+20,M\}\),
\(\omega_n\sim M_n^{-1/2}\), and the critical representation size is
\(M_n\asymp H_n^2\). Every fixed finite representation satisfies
\(\mathcal I_n\to\infty\).
\end{corollary}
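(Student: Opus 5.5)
The whole proof is the asymptotic evaluation of \(\omega_n=\E Q_{M_n}/M_n\) under the truncated Poisson law \eqref{eq:poisson-count}, followed by substitution into \(\mathcal I_n=H_n\omega_n\). The regime boundaries of Theorem~\ref{thm:main} then translate into growth conditions on \(M_n\).

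\emph{Decomposition.} Since \(Q_M=\max\{1,\min(N_M,M)\}\),
\[
\E Q_M=\E\min(N_M,M)+\Prb(N_M=0)=\E N_M-\E(N_M-M)_+ +e^{-\xi_M}
=\xi_M+e^{-\xi_M}-\E(N_M-M)_+ .
\]
The max only changes the value at \(N_M=0\), which is why \(\Prb(N_M=0)\) appears.

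\emph{Fixed mtry.} For \(\xi_M\equiv\xi\), bound the overshoot by
\[
\E(N-M)_+\le \E\{N\one(N>M)\}\le \frac{\E N^2}{M}=\frac{\xi+\xi^2}{M}\to0 .
\]
In fact it decays superexponentially, by Poisson tail bounds. Hence \(\E Q_{M_n}\to\xi+e^{-\xi}\) and \(\omega_n\sim(\xi+e^{-\xi})/M_n\). Then \(\mathcal I_n\asymp H_n/M_n\), so the three regimes are:
\begin{itemize}
\item \(\mathcal I_n\to0\) if and only if \(M_n/H_n\to\infty\);
\item \(\mathcal I_n\to\infty\) if and only if \(M_n/H_n\to0\);
\item \(\mathcal I_n\to\kappa\in(0,\infty)\) exactly when \(M_n\sim(\xi+e^{-\xi})H_n/\kappa\).
\end{itemize}
This is the sense in which the critical size is \(M_n\asymp H_n\).

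\emph{Package default.} Let \(\xi_M=\min\{\lceil\sqrt M\rceil+20,M\}\). For \(M\) large, \(\sqrt M+20<M\), so \(\xi_M=\lceil\sqrt M\rceil+20\sim\sqrt M\), and \(e^{-\xi_M}\to0\). The overshoot term is where care is needed, because \(\xi_M\to\infty\). Use the Chernoff bound for \(N\sim\Pois(\xi)\) with \(M\ge2\xi\):
\[
\Prb(N\ge M)\le e^{-\xi}(e\xi/M)^M .
\]
Combine it with Cauchy--Schwarz:
\[
\E(N-M)_+\le\{\E N^2\}^{1/2}\Prb(N>M)^{1/2}.
\]
With \(\xi\sim\sqrt M\), the right side is \(O(M)\exp\{-cM\log M\}\to0\). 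Hence \(\E Q_{M_n}\sim\sqrt{M_n}\) and \(\omega_n\sim M_n^{-1/2}\). Then \(\mathcal I_n\asymp H_n/\sqrt{M_n}\), giving the critical size \(M_n\asymp H_n^2\) by the same trichotomy.

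\emph{Fixed finite representation.} If \(M_n\equiv M\), then \(Q_M\ge1\), so \(\omega_n\ge1/M\) and \(\mathcal I_n\ge H_n/M\). By \eqref{eq:rates}, \(k_n/s_n\to0\), so \(H_n=\log(s_n/k_n)\to\infty\) and therefore \(\mathcal I_n\to\infty\). This holds for any candidate-count law, so Theorem~\ref{thm:main} delivers recovery.

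The only delicate step is the overshoot bound under the default scaling. Everything else is direct substitution, and the conditions of Theorem~\ref{thm:main} are needed only to interpret the regimes, not to derive the rates.
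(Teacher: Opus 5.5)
Your proof is correct and follows the paper's own argument. It uses the same identity $\E Q_M=\xi_M+e^{-\xi_M}-\E(N_M-M)_+$, shows the overshoot term vanishes in both regimes, substitutes into $\mathcal I_n=H_n\omega_n$, and for fixed $M$ combines $H_n\to\infty$ with a positive lower bound on $\omega_n$. The difference is that you make the overshoot bounds explicit (a second-moment bound for fixed $\xi$, and Chernoff plus Cauchy--Schwarz under the default), where the paper only says ``the final term vanishes'' and ``Poisson concentration.''
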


Under the default rule, recovery is guaranteed when
\(M_n=o(H_n^2)\), compared with \(M_n=o(H_n)\) for a fixed
Poisson mean. Since \(H_n=\log(s_n/k_n)\), increasing the sample
size expands the scope for redundant columns only slowly.
The default rule permits faster growth in the number of columns,
but retaining many equivalent encodings can still leave
a covariate represented once with few opportunities to affect
prediction. Every fixed representation covered by the theorem
eventually satisfies these growth conditions. In finite samples,
however, these asymptotic rates provide no numerical cutoff for
how many redundant columns are acceptable.

\section{Finite-Array Class Sampling}\label{sec:quotient}

I propose sampling split classes instead of individual columns. Fix an
array \(\mathcal X_{\rm cert}=\{x_1,\ldots,x_N\}\) containing every point used
for fitting, prediction, variance estimation, and policy ranking. A column's
certificate is its weak ordering of this array, identified with the reversed
ordering. Equal certificates generate the same unordered threshold
partitions on every subset of the array.

Construct the classes before growing the forest. Replace each raw column on
\(\mathcal X_{\rm cert}\) by its weak-rank vector, assigning the same rank to
tied observations, and treat a rank vector and its reversal as the same
certificate. Group columns with identical certificates, choose a canonical
orientation and weak-rank representative for each group, and order the groups
by their certificates. At each node, sample these groups uniformly without
replacement and evaluate the usual CART threshold splits using their canonical
representatives. Index child orientation, tie-breaking, folds, and all other
randomization by the certificate rather than by the original column label.
Hold nuisance fits fixed across representations or fit them on the same
canonical matrix. All tuning rules and reported outputs use this matrix and
the same settings.
For policy ranking, break prediction ties by a fixed ordering of observation
indices.
Because this reduction occurs before candidate sampling, adding an encoding to
an existing class, or removing one without removing its class, does not alter
either the candidate probabilities or the random draws.

\begin{proposition}[Class-sampled forest]\label{prop:class}
Fix the certified classes, canonical routing, and class-level candidate law.
Under common class-indexed randomness, permuting columns, adding a column to
an existing class, deleting a column without removing its class,
complementing a column, or applying an encoding that preserves or reverses the
observed weak ordering and ties leaves all tree partitions, forest weights,
predictions, reported standard errors, and selected sets on
\(\mathcal X_{\rm cert}\) unchanged.
\end{proposition}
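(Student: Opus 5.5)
The argument shows that every object the forest computes on \(\mathcal X_{\rm cert}\) depends on the raw design matrix only through the \emph{canonical class matrix}. That matrix is the ordered list of certified classes, each carried by its canonical oriented weak-rank representative. Call a transformation of the raw columns \emph{admissible} if it is one of the listed operations. These are: permutation, adding a column to an existing class, deleting a column whose class survives, complementation, or an encoding that preserves or reverses the observed weak ordering and ties. The plan has three steps. First, show that each admissible operation leaves the canonical class matrix unchanged. Second, show that the forest, given the common class-indexed randomness, is a deterministic function of that matrix, the outcome and treatment data, and the randomness. Third, conclude invariance of all reported outputs by composition.

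\emph{Step 1 (invariance of certificates).} A column's certificate is its weak-rank vector on \(\mathcal X_{\rm cert}\), modulo reversal.
\begin{itemize}
\item A strictly increasing encoding preserves all comparisons \(x_{ij}<x_{kj}\) and \(x_{ij}=x_{kj}\), so the weak-rank vector is unchanged.
\item A strictly decreasing encoding, or complementation, reverses the strict comparisons and keeps the ties. The result is the reversed rank vector, which is identified with the original certificate.
\item More generally, any encoding that preserves or reverses the observed weak ordering and ties yields the same certificate by definition.
\end{itemize}
Permutation only relabels columns. Adding or deleting a column within a surviving class leaves the set of certificates unchanged. Since classes are ordered by certificate, and the canonical orientation and representative are chosen as functions of the certificate alone, the ordered canonical matrix is identical before and after the operation.

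\emph{Step 2 (the forest factors through the canonical matrix).} I will argue by induction over node addresses within each tree, conditional on common nuisance fits and the subsample and fold draws, which are indexed by certificate and observation index. At a node, the observation set is a subset of \(\mathcal X_{\rm cert}\). The candidate draw samples classes from the class-level law using randomness indexed by class, so the sampled set of certificates is the same. For each sampled class, the CART threshold partitions of the node's observations depend only on the weak ordering restricted to that subset. They therefore coincide with those of the canonical representative, since equal certificates generate the same unordered partitions on every subset. The gains are functions of the partition and the scores, so they are identical. Ties are broken by certificate-indexed order, and canonical routing fixes child orientation, so the chosen split and the two labelled child cells are identical. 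The stopping decision depends only on these gains, so it is also identical. Induction over depth gives identical trees. Honest leaf estimates use the same estimation observations, so the forest weights \(\alpha_i(x)\) at each \(x\in\mathcal X_{\rm cert}\) are identical, and hence so are the predictions. Standard errors, for example from the half-sampling or grouped-tree variance estimator, are functions of the per-tree weights and scores on the same groups, so they coincide. Policy rankings use the predictions and the fixed observation-index tie-break, so the selected sets coincide. The tuning rules operate on the same canonical matrix with the same settings, so tuning does not introduce a discrepancy.

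\emph{Main obstacle.} The delicate point is that, at each node, all randomness and every tie-breaking convention must be shown to be a function of certificates rather than of raw labels or raw numeric thresholds. This is most acute for the orientation of children (left versus right) under a reversing encoding, and for the numeric threshold value, which differs across encodings. I will handle this by working with partitions of index sets rather than numeric thresholds, so that a reversed encoding yields the same unordered partition. Canonical routing then assigns the child labels from the canonical orientation. Routing of prediction points is well defined because every such point lies in \(\mathcal X_{\rm cert}\), so no numeric threshold is ever evaluated off the array.
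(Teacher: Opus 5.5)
Your proposal is correct and follows essentially the same route as the paper's proof. You couple the two representations through common class-indexed randomness, argue by induction over nodes that the same sampled classes, certified partitions, gains, tie order, and canonical orientation give identical tree partitions on \(\mathcal X_{\rm cert}\), and then carry this through to leaf observations, forest weights, nuisance fits, predictions, variance estimates, and tie-broken selected sets. Your Step~1 checks explicitly that each listed operation leaves the ordered canonical class matrix unchanged; the paper treats this as immediate because the class reduction happens before candidate sampling.
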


This guarantee applies to the declared array, not to future points whose
ordering has not been certified. The grouping criterion also differs from
that of \citet{chen2013module}, who construct modules of correlated variables
and sample a representative from each selected module. Correlation alone
is not sufficient for the split classes used here.

\section{Finite-Sample Evidence}\label{sec:evidence}

The Monte Carlo compares CATE estimation error and policy regret across
redundant encodings and under class sampling. In the
job-training application of \citet{chernozhukov2022automatic}, the comparison
instead concerns changes in individual outcome-regression contrasts and
estimates of the average treatment effect on the treated (ATET). In each
exercise, refitting with a different forest
seed provides a benchmark for the changes caused by retaining redundant encodings.

\subsection{Monte Carlo}

Let \(p\) be the number of baseline covariates. The Monte Carlo independently draws \(X\sim\operatorname{Unif}([0,1]^p)\),
\(W\sim\operatorname{Bernoulli}(1/2)\), and
\begin{align*}
\tau(X)&=0.20+0.65s_1-0.55s_2+0.10(X_3-0.5),\\
Y&=1+X_4+0.5X_5^2-0.5X_6+W\tau(X)+\varepsilon,
\qquad \varepsilon\sim N(0,1).
\end{align*}
Here \(\varepsilon\) is the outcome disturbance and
\(s_j=2\one\{X_j\ge1/2\}-1\) is the binary state indicator for
coordinate \(j\in\{1,2\}\).
Table~\ref{tab:mc} compares paired representations that add eight monotone or
order-reversing encodings of \(X_1\) or \(X_2\). These are standardization,
complementation, an affine change of units, a shifted logarithm, ranks, a
shifted square, an exponential, and a shifted cube. Each cell averages 100 independent
samples of 3,000 observations, with 30,000 evaluation observations and
2,000-tree forests fitted with \texttt{grf} 2.4.0. Both representations use
the true nuisance functions \(\E(Y\mid X)\) and \(\Prb(W=1\mid X)=1/2\).
Thus the comparison isolates the forest fit from nuisance-estimation error.
The forests use default \texttt{mtry}, a minimum node size of five, and
honest half-samples with an equal split between partition construction and
leaf estimation. Data and forest seeds are paired. For predictions
\(\widehat\tau_1\) and \(\widehat\tau_2\), define the sign-disagreement rate
\(\mathrm{SR}_c\) as the fraction with opposite signs and both magnitudes
at least the magnitude threshold \(c\ge0\).

\begin{table}[htbp]
\caption{CATE sign disagreement under equivalent representations}
\label{tab:mc}
\centering
\small
\setlength{\tabcolsep}{5pt}
\begin{tabular}{@{}lccc@{}}
\toprule
Comparison & Raw columns & \(\mathrm{SR}_0\) (\%) & \(\mathrm{SR}_{0.10}\) (\%)\\
\midrule
\(p=40\), representation change  &  48 &  8.81 (0.66) &  0.02 (0.003)\\
\(p=100\), representation change & 108 & 23.41 (0.66) & 12.70 (0.83)\\
\(p=250\), representation change & 258 & 38.60 (0.84) & 26.87 (0.73)\\
\(p=500\), representation change & 508 & 45.74 (0.54) & 32.88 (0.91)\\
\(p=500\), forest-seed change    & 508 &  0.09 (0.01) &  0.00 (0.00)\\
\(p=500\), class-sampled forest   & 508 &  \textbf{0.00} (0.00) &  \textbf{0.00} (0.00)\\
\bottomrule
\end{tabular}
\legend{Entries are means across 100 independent replications, with Monte Carlo
standard errors in parentheses. The first four rows compare representations with
eight added encodings. The last two rows change only the seed or apply class
sampling to the \(p=500\) design.}
\end{table}

Sign disagreement rises sharply with the ambient number of columns. At
\(p=500\), 32.88 percent of observations have opposite signs with both
estimates at least 0.10 in magnitude. Changing only the forest seed produces
0.09 percent sign disagreement. Class sampling removes the difference exactly
on the pooled training and evaluation array.

For the \(N=30{,}000\) evaluation observations \(X_1^e,\ldots,X_N^e\),
define CATE mean squared error (MSE) \(\mathrm{MSE}_j\) and treatment-rule regret
\(\mathrm{Reg}_j\) as
\[
\begin{aligned}
\mathrm{MSE}_j&=\frac1N\sum_{i=1}^N
 \{\widehat\tau_j(X_i^e)-\tau(X_i^e)\}^2,\qquad\mathrm{Reg}_j=\frac1N\sum_{i=1}^N |\tau(X_i^e)|
 \one\{\widehat\pi_j(X_i^e)\ne\pi^*(X_i^e)\},
\end{aligned}
\]
where \(j\) indexes the fitted forest,
\(\widehat\pi_j(x)=\one\{\widehat\tau_j(x)>0\}\) is its fitted treatment
rule, and \(\pi^*(x)=\one\{\tau(x)>0\}\) is the optimal treatment rule.
Table~\ref{tab:losses} reports
both losses, including results for class sampling, which was evaluated
at \(p=500\). The class-sampled fits use the same oracle nuisance functions,
certify weak-order classes on the pooled training and evaluation covariates,
and apply the default candidate budget to the number of classes.

% BEGIN GENERATED LOSS TABLE
\begin{table}[htbp]
\caption{CATE accuracy and treatment-rule regret}
\label{tab:losses}
\centering\small
\setlength{\tabcolsep}{5pt}
\begin{tabular}{@{}rlr@{\enspace}lr@{\enspace}l@{}}
\toprule
\(p\) & Forest & \multicolumn{2}{c}{MSE} & \multicolumn{2}{c}{Regret}\\
\midrule
40 & Encodings of \(X_1\) & 0.0361 & (0.0012) & 0.0127 & (0.0009)\\
 & Encodings of \(X_2\) & 0.0296 & (0.0010) & 0.0058 & (0.0007)\\
\addlinespace[2pt]
100 & Encodings of \(X_1\) & 0.0747 & (0.0018) & 0.0222 & (0.0006)\\
 & Encodings of \(X_2\) & 0.0615 & (0.0018) & 0.0093 & (0.0015)\\
\addlinespace[2pt]
250 & Encodings of \(X_1\) & 0.1450 & (0.0019) & 0.0257 & (0.0001)\\
 & Encodings of \(X_2\) & 0.1511 & (0.0023) & 0.0438 & (0.0026)\\
\addlinespace[2pt]
500 & Encodings of \(X_1\) & 0.1962 & (0.0015) & 0.0262 & (0.0001)\\
 & Encodings of \(X_2\) & 0.2346 & (0.0020) & 0.0647 & (0.0016)\\
 & Class sampling & \textbf{0.3653} & (0.0021) & \textbf{0.0047} & (0.0004)\\
\bottomrule
\end{tabular}
\legend{Means across the same 100 replications as the representation comparisons in Table~\ref{tab:mc}. Parentheses give Monte Carlo standard errors. Both class-sampled representations have exactly identical predictions in every replication. Regret is in outcome units per person.}
\end{table}
% END GENERATED LOSS TABLE

% BEGIN GENERATED LOSS DISCUSSION
At \(p=500\), class sampling has mean squared error 0.3653, compared
with 0.1962 and 0.2346 when encodings of \(X_1\) and \(X_2\)
are retained. Its mean regret is 0.0047, compared with 0.0262
and 0.0647. Class sampling improves treatment decisions in this design
while increasing mean squared CATE error.
% END GENERATED LOSS DISCUSSION

\FloatBarrier

\subsection{Empirical Replication: Job Training and Earnings}

I replicate specifications 1 and 2 of \citet{chernozhukov2022automatic} for
the National Supported Work (NSW) experimental comparison and the Panel Study
of Income Dynamics (PSID) and Current Population Survey (CPS) comparison groups.
Across these six cases, the reconstructed full-dictionary ATETs are within
0.10 published standard errors of the reported estimates.

The outcome-regression dictionaries include age, education, 1974 earnings,
and 1975 earnings together with their squares. Since
their observed supports are nonnegative, each raw-square pair offers the same
CART threshold partitions. I certify classes on the pooled observed and
treatment-counterfactual rows, including both treatment states for every
covariate vector used in prediction. The reduction retains one original
column per class and removes four or five of 24 to 30 columns. It changes only the
random-forest outcome regression, leaving the samples, folds, Riesz/Lasso
stage, and orthogonal score fixed.

Table~\ref{tab:empirical} compares the full and reduced dictionaries using
original column values. One reduction recomputes the
\texttt{randomForest} regression default, one third of the supplied column
count rounded down, after removing columns. The other holds its original numerical value fixed.
All fits use 1,000 trees and the same seed. The lower panels report ATET
changes and individual contrast sign changes, with benchmarks from four
alternative seeds of the full dictionary.

% BEGIN GENERATED EMPIRICAL TABLE
\begin{table}[htbp]
\caption{ATET estimates with full and reduced forest dictionaries}
\label{tab:empirical}
\centering\small
\setlength{\tabcolsep}{6pt}
\begin{tabular}{@{}lrrr@{}}
\toprule
 & Full dictionary & Reduced, default & Reduced, fixed\\
Sample / specification & & \texttt{mtry} & \texttt{mtry}\\
\midrule
NSW / 1 & 3,013.6 (1,277.9) & 3,127.8 (1,322.0) & 3,043.6 (1,304.9)\\
NSW / 2 & 3,207.0 (1,364.9) & 3,191.1 (1,359.7) & 3,038.1 (1,264.3)\\
PSID / 1 & 1,469.0 (975.6) & 1,555.0 (924.2) & 1,645.8 (941.6)\\
PSID / 2 & 1,381.1 (949.0) & 1,362.6 (921.7) & 1,345.0 (931.0)\\
CPS / 1 & 1,636.2 (616.3) & 1,483.6 (599.7) & 1,621.5 (607.5)\\
CPS / 2 & 1,566.2 (618.3) & 1,505.1 (605.7) & 1,570.0 (608.5)\\
\midrule
 & \multicolumn{2}{c}{Change from full dictionary} & Seed variation\\
 & Default \texttt{mtry} & Fixed \texttt{mtry} & Mean absolute (max.)\\
\midrule
NSW / 1 & +114.3 & +30.0 & 43.8 (86.4)\\
NSW / 2 & -15.9 & -169.0 & 176.1 (234.7)\\
PSID / 1 & +86.0 & +176.8 & 36.4 (61.4)\\
PSID / 2 & -18.5 & -36.1 & 46.1 (64.8)\\
CPS / 1 & -152.6 & -14.7 & 22.3 (30.4)\\
CPS / 2 & -61.1 & +3.8 & 10.1 (20.0)\\
\midrule
 & \multicolumn{2}{c}{Individual contrast sign changes (\%)} & Seed variation (\%)\\
 & Default \texttt{mtry} & Fixed \texttt{mtry} & Mean (max.)\\
\midrule
NSW / 1 & 5.88 & 4.20 & 1.75 (2.80)\\
NSW / 2 & 5.88 & 5.32 & 2.17 (2.80)\\
PSID / 1 & 4.17 & 2.30 & 1.26 (1.86)\\
PSID / 2 & 2.96 & 3.07 & 1.70 (1.86)\\
CPS / 1 & 10.66 & 7.16 & 4.38 (4.58)\\
CPS / 2 & 11.50 & 7.13 & 4.61 (4.93)\\
\bottomrule
\end{tabular}
\legend{ATETs and ATET changes are in dollars. Upper-panel parentheses give standard errors. Full-dictionary entries are reconstructed estimates. Each fit uses 1,000 trees and fixed five-fold cross-fitting. Representation comparisons pair seed 1 before and after reduction. Sign changes are percentages of all observations whose cross-fitted contrast \(\widehat m(1,x)-\widehat m(0,x)\) changes sign. Seed benchmarks compare the full dictionary at seed 1 with seeds 2--5 and report the mean and maximum across these four comparisons. Specifications 1 and 2 have 24 and 30 original columns. Reduction leaves 19 and 25 columns in NSW and CPS, and 20 and 26 in PSID. Recomputed \texttt{mtry} is 6 and 8, versus 8 and 10 when held fixed.}
\end{table}
% END GENERATED EMPIRICAL TABLE

% BEGIN GENERATED EMPIRICAL DISCUSSION
Removing split-equivalent columns reverses the sign of the cross-fitted
outcome-regression contrast \(\widehat m(1,x)-\widehat m(0,x)\) for
2.3--11.5 percent of all observations and 3.2--11.4 percent of treated
observations. The corresponding mean seed benchmarks range from 1.3 to
4.6 percent and from 1.8 to 3.5 percent. In all six sample--specification
cases, both reductions produce a higher sign-change rate over all observations
than any of the four seed-only comparisons in Table~\ref{tab:empirical}.
Stable aggregate conclusions therefore do not imply stable signs for
individual first-stage contrasts.

By comparison, the ATET changes range from 4 to 177 dollars in absolute value. All estimates
remain positive, with no changes in statistical significance at the 5 percent
level. Representation changes do not uniformly exceed seed variation in the
aggregate ATET. For example, in NSW specification 2, the default-budget
reduction changes the estimate by 16 dollars, compared with a mean absolute
seed change of 176 dollars.

These sign changes may partly reflect differences in cutoff routing, not
just candidate sampling. With nonlinear encodings, native midpoint cutoffs
can send the same evaluation observation to different leaves.
To remove this source of variation, Table~\ref{tab:canonical} compares the
full and reduced dictionaries after
assigning identical dense ranks to equivalent columns on the pooled observed
and counterfactual array. Both dictionaries use these same ranks, the original
numerical \texttt{mtry}, folds, seed, and Riesz estimates. Only the forest inputs
use rank encodings. The orthogonal score retains its original dictionary.

% BEGIN GENERATED CANONICAL TABLE
\begin{table}[htbp]
\caption{ATET comparisons with identical canonical rank coding}
\label{tab:canonical}
\centering\small
\setlength{\tabcolsep}{5pt}
\begin{tabular}{@{}lrrrr@{}}
\toprule
Sample / specification & Full & Reduced & Change & Sign changes (\%)\\
\midrule
NSW / 1 & 3,142.3 (1,340.2) & 3,066.5 (1,290.1) & -75.8 & 3.36\\
NSW / 2 & 3,221.9 (1,410.8) & 3,289.5 (1,413.2) & +67.6 & 2.80\\
PSID / 1 & 1,506.4 (971.5) & 1,646.2 (941.2) & +139.8 & 2.52\\
PSID / 2 & 1,385.5 (949.7) & 1,336.4 (934.9) & -49.1 & 2.63\\
CPS / 1 & 1,677.7 (616.3) & 1,597.1 (607.0) & -80.6 & 6.83\\
CPS / 2 & 1,551.6 (618.1) & 1,522.5 (609.7) & -29.2 & 8.34\\
\bottomrule
\end{tabular}
\legend{ATETs and changes are in dollars, with standard errors in parentheses.
Changes are reduced minus full. Both dictionaries use 1,000 trees, seed 1,
the same five folds, and \texttt{mtry} equal to 8 or 10. Sign changes concern
the cross-fitted outcome-regression contrast over all observations.}
\end{table}
% END GENERATED CANONICAL TABLE

With canonical ranks, absolute ATET changes range from 29 to 140 dollars,
and 2.5--8.3 percent of individual contrasts change sign. The ATET estimates
remain positive, with no changes in statistical significance at the 5 percent
level. Thus nonlinear
midpoint routing does not account for all of the first-stage sensitivity.

Code and outputs are available in the
\href{https://github.com/YiNiu-bot/representation-multiplicity-causal-forests}{replication repository}.
From its root, \texttt{python3 scripts/verify\_replication.py} reconstructs the
tables. The package documents full refits, encodings, retained columns,
software versions, folds, and seeds.

\FloatBarrier

\section{Conclusion}\label{sec:conclusion}

Feature subsampling makes column allocation part of a forest's specification.
Under the conditions of Theorem~\ref{thm:main}, reallocating redundant columns
can change the limiting CATE and treatment rule without changing covariate
information. In the Monte Carlo, class sampling lowers treatment-rule regret but increases
CATE mean squared error. The replication illustrates that aggregate ATET
comparisons alone can miss changes in individual first-stage contrasts. Sampling certified split classes removes the extra candidate weight assigned
to redundant columns. Under common class-level randomness and canonical
routing, equivalent representations give identical predictions on the
certified array.

\clearpage
\input{arxiv_appendix}

\begingroup
\small
\bibliographystyle{plainnat}
\bibliography{references}
\endgroup
\end{document}

%% file: arxiv_appendix.tex
\begin{appendix}

\section{Proofs}\label{app:proofs}

\begin{proof}[Proof of Theorem~\ref{thm:main}]
All probability bounds for the endpoints are uniform over \(a\in\mathcal C_a\).
Write \(\mathcal D_n\) for the training data and the fitted nuisance functions,
including their training randomization. Uniform boundedness, the fixed
two-dimensional rectangle class, and relative Vapnik--Chervonenkis (VC)
Bernstein bounds give
\[
r_n=\sqrt{\frac{\log(B_ns_n)}{k_n}}+
\frac{\log(B_ns_n)}{k_n}+
\|\widehat m-m_a\|_\infty=o_p(1).
\]
These bounds hold simultaneously for the root splitting and estimation
samples and all their rectangular restrictions. They therefore cover both
ordinary and failure trees, even though the latter use conditional transition
laws. Increasing encodings do not enlarge this rectangle class. The score
identity and its perturbation bound are
\[
\E(\Gamma_i\mid X_i=x)=\tau_a(x),
\qquad
\max_i|\widehat\Gamma_i-\Gamma_i|
\le2\|\widehat m-m_a\|_\infty.
\]
Combining the concentration bounds with \eqref{eq:transfers} yields
\begin{equation}\label{eq:proof-criterion}
\sup_{C,j,t}
\left|\widehat{\mathcal Q}^{L}_C(j,t)-\mathcal Q_C(j,t)\right|
=o_p(1),
\end{equation}
where \(\mathcal Q_C\) is the population score gain.

First consider a node before coordinate \(j\) has been used. Its range in
that coordinate is still \([0,1]\), and independence removes the other
component from the split contrast. Direct calculation gives
\begin{equation}\label{eq:proof-shape}
\mathcal Q_C(j,t)=
\begin{cases}
a_j^2t/(1-t),&0<t\le1/2,\\
a_j^2(1-t)/t,&1/2\le t<1.
\end{cases}
\end{equation}
The unique maximizer is \(1/2\), with gain \(a_j^2\). Positivity of
\(\underline a\) and the coefficient gap imply
\(a_2^2-a_1^2\ge2\underline a\Delta_a>0\).
At nodes with at least \(C_0k_n\) splitting observations, cuts approaching
\(1/2\) are size-admissible and balanced on the concentration event.
Maximization and \eqref{eq:proof-criterion} thus imply
\begin{equation}\label{eq:proof-localization}
\max_{j,F}|\widehat t_{j,F}-1/2|=o_p(1),
\end{equation}
where \(F\) ranges over first-use nodes above this size band in the ordinary
and failure trees.

The continuation condition now supplies the depth bound. It is used here
in addition to balance. If \(n_d(x)\) is the splitting-sample size on a path,
then, whenever a split is made,
\[
\beta n_d(x)\le n_{d+1}(x)\le(1-\beta)n_d(x).
\]
The root size is proportional to \(s_n\), no path stops above \(C_0k_n\),
and every child has at least \(k_n\) observations. Consequently there are
constants \(0<c_\beta<C_\beta<\infty\) and an integer \(K\), independent of
\(n,a,h\), such that, with probability tending to one,
\begin{equation}\label{eq:proof-depth}
c_\beta H_n\le J_{h,n}(X),J_{h,n}^0(X)\le C_\beta H_n.
\end{equation}
Here \(J_{h,n}\) is the number of candidate draws on an ordinary path,
and \(J_{h,n}^0\) is defined when \(\omega_n<1\).
At most \(K\) draws occur after the path enters the band
\(n_d<C_0k_n\), including a final unsuccessful search. Reached leaves have
splitting sizes between \(k_n\) and \(C_0k_n\). Relative concentration gives
their population masses of order \(k_n/s_n\) and honest estimation counts
of order \(k_n\). In particular,
\begin{equation}\label{eq:proof-leaf}
\max_{b,L}\left|
\frac{1}{N_{b,n}^{\rm est}(L)}
\sum_{i\in I_{b,n}^{\rm est}:X_i\in L}\widehat\Gamma_i
-\E\{\tau_a(X)\mid X\in L\}
\right|=o_p(1).
\end{equation}
The maximum includes both representations. The counts are positive on this
event.

We next control the approximation after first use. Choose a deterministic
\(\delta_n\downarrow0\) so that \eqref{eq:proof-localization} is at most
\(\delta_n\) with probability tending to one. For each coordinate, its
first-use nodes above the terminal band form an antichain. Label only their
children by the sign suggested by the side of the cut, and retain this label
in their descendants. The total population mass receiving a wrong label is
\[
\sum_F P_X(F)|\widehat t_{j,F}-1/2|\le\delta_n,
\]
where the sum is over these above-band first-use nodes.
For a descendant \(C\), let \(q(C)\) be the conditional fraction with the
wrong label. Then
\[
\sup_t\mathcal Q_C(j,t)\le4\overline a^2q(C),
\qquad
\int_C|\E(s_j\mid C)-s_j(x)|\,dP_X(x)\le4P_X(C)q(C).
\]
These bounds also apply after further splits. For
\(u_n=\sqrt{\delta_n}\), select the first descendant on each path for which
\(q(C)>u_n\). Those descendants are disjoint, so their combined mass is at
most \(\delta_n/u_n\). Outside this vanishing set, a previously used
coordinate has gain at most \(4\overline a^2u_n\) at every subsequent node.
An offered unused coordinate has gain at least
\(\underline a^2-o_p(1)\) above the terminal band and therefore wins.
Only the root can involve competition between two unused coordinates.

The repeated coordinate is offered with probability at least \(1/2\) at
each draw. Together with \eqref{eq:proof-depth}, the preceding ordering
argument bounds its probability of remaining unused above the terminal
band by
\[
2^{-\lfloor c_\beta H_n\rfloor+K+1}+o(1)=o(1).
\]
For the singleton, the probability of first use only in that band is also
\(o(1)\). If \(\omega_n\le H_n^{-1/2}\), it is at most \(K\omega_n\).
Otherwise, avoiding an earlier use has probability at most
\(\exp\{-\omega_n(\lfloor c_\beta H_n\rfloor-K-1)\}+o(1)\).
The error terms here include the exceptional target mass just bounded.

Let \(D_{h,n}(x)\) indicate first use of the singleton in a generic tree,
and let \(\check T^{\rm tree}_{h,n}\) be that tree's honest score prediction.
If \(D_{h,n}(x)=0\), its leaf has no restriction in the singleton coordinate,
whose conditional mean is exactly zero. For a first use above the terminal
band, localization and the wrong-label bound control the integrated
approximation error. A late first use adds at most twice its probability to
the integrated error for the sign component, which vanishes by the preceding
bound.
The same bound applies to the repeated component. Thus
\begin{equation}\label{eq:proof-mixture}
\E_\Theta\!\left[
\left\|\check T^{\rm tree}_{h,n}
-D_{h,n}\tau_a-(1-D_{h,n})a_hs_h\right\|_{L^1(P_X)}
\middle|\mathcal D_n\right]=o_p(1).
\end{equation}
This is an approximation after first use, not an assertion that every leaf
is exactly pure.

Suppose now that \(\omega_n<1\). Conditional on \(Q_{M_n}=q\), the singleton
is included with probability \(q/M_n\), so its unconditional inclusion
probability at a node is \(\omega_n\). On the locally conditioned failure
tree, write \(\rho_{h,n,t}^0(x)\) for the probability that the ordinary
transition at the current node chooses the singleton as its splitting
coordinate. It satisfies \(0\le\rho_{h,n,t}^0\le\omega_n\).
After the root and above the terminal band, this probability equals
\(\omega_n\) outside the exceptional paths. With
\(\Lambda_{h,n}^0=\sum_{t\le J_{h,n}^0}\rho_{h,n,t}^0\), we obtain
\begin{equation}\label{eq:proof-hazard}
\left|\Lambda_{h,n}^0-J_{h,n}^0\omega_n\right|
\le(K+1)\omega_n
\end{equation}
outside a set of joint training, failure-tree, and test-point probability
tending to zero.

Node locality ensures that conditioning transitions on other branches leaves
the transition kernels on the target path unchanged. At each stage, factor
the probability of avoiding a singleton split from
the transition kernel conditional on that event. Multiplication along the
path and integration over the conditional kernels give the exact identity
\begin{equation}\label{eq:proof-survival}
u_{h,n}(x):=\Prb_\Theta\{D_{h,n}(x)=0\mid\mathcal D_n\}
=\E_{\Theta^0}\!\left[
\prod_{t=1}^{J_{h,n}^0(x)}(1-\rho_{h,n,t}^0(x))
\middle|\mathcal D_n\right].
\end{equation}
The kernels are well-defined because their conditioning probabilities are
at least \(1-\omega_n>0\). This identity concerns no first use throughout
the path, exactly the event defining \(D_{h,n}=0\).

If \(\mathcal I_n\to0\), then \(\omega_n\to0\) and
\(\Lambda_{h,n}^0\to_p0\) by \eqref{eq:proof-depth} and
\eqref{eq:proof-hazard}. The product converges to one since its deficit is
at most \(\Lambda_{h,n}^0\). If \(\mathcal I_n\to\infty\) and
\(\omega_n<1\), the cumulative hazard diverges and
\[
\prod_t(1-\rho_{h,n,t}^0)\le e^{-\Lambda_{h,n}^0}\longrightarrow_p0.
\]
For indices with \(\omega_n=1\), both coordinates are always offered.
The stronger one wins at the root and the other at the next interior node,
outside the same exceptional set. Thus recovery also holds on these indices,
without a failure-tree conditioning on a null event.

For the intermediate case, fix \(a\in\mathcal C_a\).
The condition \(\mathcal I_n\to\kappa\in(0,\infty)\) implies
\(\omega_n\to0\). Equations~\eqref{eq:intermediate-depth} and
\eqref{eq:proof-hazard} yield
\(\Lambda_{h,n}^0\to_p\kappa d_h(a)\). Moreover,
\[
0\le-\log\prod_t(1-\rho_{h,n,t}^0)-\Lambda_{h,n}^0
\le\frac{\omega_n\Lambda_{h,n}^0}{1-\omega_n}
\longrightarrow_p0.
\]
The product therefore converges to \(e^{-\kappa d_h(a)}\).
All products are bounded. Joint convergence with an independent test point
implies convergence of \(u_{h,n}\) in \(L^1(P_X)\), in probability over
\(\mathcal D_n\), to one, zero, or \(e^{-\kappa d_h(a)}\) in the respective
regimes.

Finally, conditional on \(\mathcal D_n\), the independent partition of the
\(B_n\) trees into groups of bounded size, together with bounded score
predictions, gives
\[
\left\|\check T_{h,n}-
\E_\Theta(\check T^{\rm tree}_{h,n}\mid\mathcal D_n)\right\|_{L^1(P_X)}
=O_p(B_n^{-1/2})=o_p(1).
\]
Combine this with \eqref{eq:proof-mixture}, the three limits for \(u_{h,n}\),
and the prediction transfer in \eqref{eq:transfers}. This proves
\eqref{eq:selection-limit}, \eqref{eq:recovery-limit}, and
\eqref{eq:intermediate-targets}.

A strictly increasing encoding maps a threshold \(t\) to \(\varphi(t)\)
and preserves \(\varphi(x)\le\varphi(t)\) exactly when \(x\le t\).
All cells and concentration bounds above are expressed in the original
coordinates. Their constants therefore do not depend on the number or
form of the increasing encodings.
\end{proof}

\begin{proof}[Proof of Corollary~\ref{cor:sign}]
In the selection regime, Markov's inequality and
\eqref{eq:selection-limit} give, for \(g=1,2\),
\[
\Prb_X\{\operatorname{sgn}(\widehat T_{g,n})\ne s_g\}
\le\frac{2}{\underline a}
\|\widehat T_{g,n}-a_gs_g\|_{L^1(P_X)}=o_p(1).
\]
The independent signs \(s_1\) and \(s_2\) disagree with probability one half.
On that event, \(|\tau_a|=a_2-a_1\ge\Delta_a\). A second application of
Markov's inequality shows that both fitted magnitudes exceed
\(\underline a/2\) outside a set of \(P_X\)-mass \(o_p(1)\). This proves
\eqref{eq:sign-reversal} and the stated separation from zero.

For any rule \(\pi\),
\[
\operatorname{Reg}_a(\pi)
=\E\!\left[|\tau_a(X)|
\one\{\pi(X)\ne\pi_a^*(X)\}\right].
\]
Since \(a_2>a_1\), \(\pi_a^*=\one\{s_2=1\}\), so the limit of
\(\widehat\pi_{2,n}\) has zero regret. For the other representation,
\[
\operatorname{Reg}_a(\one\{s_1=1\})
=\E\!\left[\tau_a
\{\one\{s_2=1\}-\one\{s_1=1\}\}\right]
=\frac{a_2-a_1}{2}.
\]
Boundedness and classification convergence transfer these values to the
estimated rules, proving \eqref{eq:regret}.
\end{proof}

\begin{proof}[Proof of Corollary~\ref{cor:boundary}]
The capped-Poisson rule satisfies
\[
\E Q_M=\xi_M+e^{-\xi_M}-\E(N_M-M)_+.
\]
For fixed \(\xi_M=\xi\), the final term vanishes as \(M\to\infty\), giving
\(\omega_n\sim(\xi+e^{-\xi})/M_n\). If \(\xi_M\to\infty\) and
\(\xi_M/M\to0\), Poisson concentration gives
\(\E Q_M/\xi_M\to1\). Under the package default,
\(\xi_M/\sqrt M\to1\), so \(\omega_n\sim M_n^{-1/2}\). The two boundaries
follow by substituting these expressions into
\(\mathcal I_n=H_n\omega_n\). For fixed finite \(M\), \(\omega_n>0\) is
constant and \(H_n\to\infty\).
\end{proof}

\begin{proof}[Proof of Proposition~\ref{prop:class}]
Couple the two representations with the common class-indexed randomness in the
proposition. At a common node they draw the same classes, evaluate the same
certified actions, and apply the same scores and tie order. Equal certificates
give identical unordered children, while canonical orientation sends every
point in \(\mathcal X_{\rm cert}\) to the same child. Induction gives identical
tree partitions on the certified array. Leaf observations and forest weights
therefore coincide. Nuisance fits coincide because they are held fixed or
computed from the same canonical matrix with common randomness. The same
inputs, settings, and randomization then give identical stopping decisions,
predictions, variance calculations, rankings, and tie-broken selected sets.
\end{proof}

\end{appendix}